\documentclass[letterpaper,10pt,conference]{arxiv-conference} \IEEEoverridecommandlockouts\usepackage{amsmath,amssymb,amsfonts} \usepackage{graphicx} \usepackage{cite} \usepackage{url} \usepackage[hidelinks,bookmarksnumbered=true, pdfauthor={Alexey Peregudin and Ngoc Tuan Dinh}, pdftitle={Exact Noise Limits for Bounded Scalar Stabilization Experiments}]{hyperref} \newtheorem{theorem}{Theorem} \newtheorem{lemma}{Lemma}  \newtheorem{corollary}{Corollary} \newcommand{\R}{\mathbb R} \newcommand{\norm}[1]{\left\|#1\right\|} \newcommand{\inner}[2]{\langle#1,#2\rangle} \newcommand{\dd}{\,d} \newcommand{\eps}{\varepsilon} \newcommand{\Proof}{\par
\noindent\emph{Proof:}\ } \newcommand{\Endproof}{\hfill$\square$\par
} \newcommand{\statementdisplayend}{\vspace{-0.6\baselineskip}}  \clubpenalty=10000 \widowpenalty=10000 \par
\title{Exact Noise Limits for Bounded Scalar Stabilization Experiments} \author{Alexey Peregudin and Ngoc Tuan Dinh\thanks{Alexey Peregudin is with the School of Electrical and Electronic Engineering, University of Sheffield, Sheffield, United Kingdom (e-mail: a.peregudin@sheffield.ac.uk).}\thanks{Ngoc Tuan Dinh is with ITMO University, St. Petersburg, Russia (e-mail: dinhngoctuan6789@gmail.com).}} \hbadness=10000\vbadness=10000
\begin{document} \maketitle\thispagestyle{empty} \begin{abstract} How much unknown process noise can a bounded experiment tolerate while still establishing that the plant can be stabilized? We answer this question for scalar discrete-time systems with bounded inputs, exact state measurements, and a disturbance-energy budget proportional to the experiment length. We distinguish individual stabilizability of every consistent model, stabilization by one common gain, and a common quadratic certificate. We determine their exact noise ceilings for every drift. The last two requirements coincide; individual stabilizability generally tolerates more noise. Bounded periodic inputs approach the ceilings, while adversarial disturbances prevent their attainment. We also determine the long-horizon limits. For unstable plants, a short experiment can establish individual stabilizability at noise levels where every sufficiently long experiment fails. The same obstruction yields upper bounds for multidimensional systems with real eigenvalues. Input design may use plant knowledge, but certification uses only the recorded data and noise bound. The results therefore provide fundamental robustness benchmarks for unknown-plant designs. \end{abstract} \begin{keywords} Data driven control, identification for control, uncertain systems. \end{keywords} \par
\section{Introduction} \label{sec:introduction} Input limits constrain how much information an experiment can provide about an uncertain plant. With unknown disturbances, the data describe a set of models. Every model may be individually stabilizable even though no single controller stabilizes them all. This letter determines exactly how much process noise a bounded scalar experiment can tolerate under these different requirements. \par
Worst-case input design has a substantial history. Optimal identification under bounded disturbances was studied in \cite{tse1993}, and set-membership experiment design has been developed for conditional identification \cite{casini2006} and constrained linear systems \cite{tanaskovic2014}. These works optimize identification error or the size of a feasible model set. Control-oriented identification instead relates experimental cost to the eventual control requirement; see, for example, \cite{bombois2006}. Our objective is directly a stabilization property of every model consistent with the experiment. Finite-alphabet stationary input design also provides established ways to enforce amplitude constraints \cite{valenzuela2013}. \par
The informativity framework formalizes this viewpoint \cite{informativity2020}. Data-based controller parametrizations \cite{depersis2020}, the matrix S-lemma \cite{slemma2022}, and Petersen's lemma \cite{petersen2022} provide powerful ways to certify stabilization from a given noisy record. In particular, exact tests for common quadratic stabilization are available under their respective assumptions. The distinction between stabilizability, stabilization, and common Lyapunov certificates is studied explicitly in \cite{commonlyapunov2022}. The question here includes a further design quantifier: for a specified plant, does some amplitude-bounded input produce a successful record for \emph{every} admissible disturbance, and what is the largest noise level for which this is possible? \par
The noise model is essential: disturbance energy is bounded by $\varepsilon^2T$, without probabilistic or samplewise assumptions. The consequences of replacing instantaneous bounds by energy bounds were examined in \cite{tradeoffs2021}. Additional samples need not average adversarial noise away, so we ask for a noise ceiling. Online noise-free experiment design \cite{beyond2022}, universal inputs using controllability or stabilizability priors \cite{shakouri2025}, and statistical stabilization sample complexity \cite{tsiamis2022} address different information patterns. \par
Our purpose is to establish an \emph{oracle limit on experimental robustness}. Imagine a designer who knows the true plant and chooses the best predetermined bounded input. The resulting data must still establish the required property for every consistent model: the certificate is not told which model is true. Plant knowledge therefore helps choose the experiment, but does not replace evidence in the record. If even this designer cannot succeed, no predetermined unknown-plant design can do better. A matching construction identifies the robustness that implementable designs could aim to approach. Finding the input without plant knowledge is a further design problem, outside this letter's benchmark. \par
We give closed-form ceilings for individual stabilizability, common-gain stabilization, and common quadratic certification, for every scalar drift. The latter two coincide, whereas the first can tolerate arbitrarily more noise in relative terms. Explicit disturbances prove impossibility at and above each ceiling. Saturated periodic inputs attain every strictly smaller noise level. We also determine the limits along all integer horizons. The common thresholds converge to their ceiling, but for unstable plants the individual limit is strictly lower: a short experiment can tolerate noise that defeats every sufficiently long one. Finally, the scalar adverse trajectory yields a multidimensional upper bound through any real left eigenvector, without a dimension factor. A separation argument and a one-step-difference energy certificate connect the converse to bounded periodic input design. All proofs are included. \par
\clearpage\section{Problem and Exact Noise Limits} \label{sec:problem} Consider the scalar system \begin{equation} x_{k+1}=ax_k+bu_k+w_k,\qquad x_0=0,\quad b\ne0. \label{eq:system} \end{equation} We assume the plant can be reset to the origin before the experiment. An experiment of length $T$ uses a predetermined input with $|u_k|\le\bar u$, where $\bar u>0$, and records all states without measurement error. The unknown process disturbance satisfies $\norm{w}_2\le\eps\sqrt T$, where $\eps\ge0$ and $w=(w_0,\ldots,w_{T-1})^\top$. Thus $\eps$ is a bound on the root-mean-square disturbance, not on each sample. The same bound defines the consistent real pairs $(\alpha,\beta)\in\R^2$: \begin{equation} \Sigma_\eps=\Bigl\{(\alpha,\beta):\sum_{k=0}^{T-1} \!(x_{k+1}\!-\alpha x_k\!-\beta u_k)^2\le\eps^2T\Bigr\}. \label{eq:consistency} \end{equation} This set contains $(a,b)$ and is therefore nonempty. Alternative models may have $\beta=0$: controllability of the true plant is not supplied as a prior to the certificate. \par
We distinguish three requirements on this realized record: \begin{enumerate} \item\emph{Individual stabilizability:} every $(\alpha,\beta)\in\Sigma_\eps$ admits a gain $K_{\alpha,\beta}$ with $|\alpha+\beta K_{\alpha,\beta}|<1$. \item\emph{Common gain:} one $K$ satisfies $|\alpha+\beta K|<1$ for every $(\alpha,\beta)\in\Sigma_\eps$. \item\emph{Common quadratic certificate:} there are $K$ and $p>0$ such that $p(\alpha+\beta K)^2-p<0$ for every $(\alpha,\beta)\in\Sigma_\eps$. \end{enumerate} The first condition does not supply a controller valid for all consistent models. For a scalar model, failure of individual stabilizability means precisely $\beta=0$ and $|\alpha|\ge1$. The last two coincide in one state dimension: $p=1$ certifies every common stabilizing gain. We keep them distinct because, in higher dimensions, a common gain need not admit a common quadratic certificate \cite{commonlyapunov2022}. The amplitude bound applies to the experiment; the certified feedback is unconstrained and stabilizes the disturbance-free models. \par
An input is \emph{successful at level $i$} if the corresponding requirement holds for every admissible disturbance. Let $\eps_i(T)$ be the supremum of the noise levels admitting a successful length-$T$ input, and set \begin{equation} \eps_i^\star=\sup_{T\ge1}\eps_i(T). \label{eq:ceiling} \end{equation} For an empty set of successful noise levels we use supremum zero. Input design may use $(a,b)$, but the input is fixed before the disturbance is chosen. The gain may depend on the record. Certification uses only the data and noise bound, retaining every model in \eqref{eq:consistency}. It uses neither the true parameters nor information inferred from the input-selection rule. This is an oracle benchmark for predetermined experiments. Adaptive experiments have different quantifiers and are outside our scope. \par
\begin{theorem}[Exact scalar noise limits] \label{thm:main} For every $a\in\R$, $b\ne0$, and $\bar u>0$, \begin{align} &\eps_1^\star=\begin{cases} \displaystyle\frac{|b|\bar u}{1+\sqrt{(1+a^2)/2}},& |a|<1,\\[1mm] |b|\bar u/2,& |a|\ge1. \end{cases}\label{eq:individual-ceiling}\displaybreak[1]\\ &\eps_2^\star=\eps_3^\star=\frac{|b|\bar u}{1+\max\{1,|a|\}}.\label{eq:common-ceiling} \end{align} Every noise strictly below its displayed ceiling admits a finite successful experiment. At or above the ceiling, every input at every horizon fails. Moreover, the common-certificate thresholds converge along all integer horizons: \begin{equation} \lim_{T\to\infty}\eps_2(T)=\lim_{T\to\infty}\eps_3(T) =\frac{|b|\bar u}{1+\max\{1,|a|\}}. \label{eq:limit} \end{equation} \statementdisplayend\end{theorem}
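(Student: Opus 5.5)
The plan is to work entirely with the residual map $r_k(\alpha,\beta)=x_{k+1}-\alpha x_k-\beta u_k=w_k+(a-\alpha)x_k+(b-\beta)u_k$. This map shows that $\Sigma_\eps$ is a closed convex set (an ellipse, possibly degenerate) in the $(\alpha,\beta)$-plane. Each requirement then reduces to a geometric condition on this convex set.

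\textbf{Step 1: geometric reformulation.} By the remark after \eqref{eq:consistency}, level 1 fails iff $\Sigma_\eps$ meets the closed rays $R=\{(\alpha,0):|\alpha|\ge1\}$. This is equivalent to $\min_{|\alpha|\ge1}\norm{x_+-\alpha x}_2^2\le\eps^2T$, where $x=(x_0,\dots,x_{T-1})$ and $x_+=(x_1,\dots,x_T)$. For level 2 (and hence level 3, using $p=1$ as noted), I will show by a separation argument that a common $K$ exists iff $\Sigma_\eps$ lies strictly inside some strip bounded by two parallel non-horizontal lines through $(\pm1,0)$. Equivalently, the convex set $\Sigma_\eps$ must avoid the region swept when $R$ is joined to $\Sigma_\eps$ through the two pivots. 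Concretely, I will check that failure is equivalent to $\Sigma_\eps$ containing a point $(\alpha,\beta)$ with $|\alpha|\ge1$ whose ``pivot line'' also meets $\Sigma_\eps$ on the other side. A cleaner sufficient condition for failure is that $\Sigma_\eps$ contains a point with $\beta=0$, or contains two points on which the affine maps $\alpha+\beta K$ cannot simultaneously be kept in $(-1,1)$. With this reformulation, both directions become energy estimates on explicit residual vectors.

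\textbf{Step 2: converse (impossibility at and above the ceilings).} For an arbitrary input with $\norm{u}_2\le\bar u\sqrt T$, I will construct explicit disturbances. For level 1, take $w_k=-\lambda\bigl((a-\sigma)x_k+bu_k\bigr)$ with $\sigma=\pm1$ chosen so that the record looks generated by $(\sigma,0)$ up to a residual scaled by $1-\lambda$. I will then balance the disturbance energy against the residual energy using $\norm{x}_2\le\norm{x_+}_2$ (from $x_0=0$) and the triangle inequality. Optimizing over $\lambda$ and over a convex combination of the choices $\sigma=+1$ and $\sigma=-1$ should produce the factor $1+\sqrt{(1+a^2)/2}$ when $|a|<1$. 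When $|a|\ge1$, the choice $\sigma=a/|a|$-type makes $(a-\sigma)x$ harmless, and the factor becomes $2$. For level 2 the target models are $(a,0)$ when $|a|\ge1$, or a pair of models straddling the pivots. The same split-the-energy computation, with $w=-\tfrac12 bu$-type choices, should give $1+\max\{1,|a|\}$. At equality the adversary exactly attains the boundary, which is why the ceilings are not attained.

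\textbf{Step 3: achievability, and the limit \eqref{eq:limit}.} For any $\eps$ strictly below the ceiling, I will use saturated periodic inputs $u_k=\pm\bar u$. I expect alternating signs for level 1 when $|a|<1$, and a sign pattern matched to $\mathrm{sign}(a)$ otherwise. The periodic steady state is computed in closed form, and the transient is made negligible by taking many periods. I will then certify the required condition from the data alone, using a one-step-difference energy certificate. Writing $x_{k+1}-\sigma x_k$ in terms of $u_k$ and $w_k$ gives a lower bound on the residual of every model in $R$, or outside the chosen strip, that exceeds $\eps\sqrt T$ whenever $\eps$ is below the ceiling. For \eqref{eq:limit}, the converse already gives $\eps_i(T)<\eps_i^\star$ for every $T$. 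For the lower bound along all integers, I will pad a long periodic experiment with a bounded tail, whose energy share vanishes as $T\to\infty$. This shows $\liminf_T\eps_{2}(T)\ge\eps_2^\star$.

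The main obstacle is Step 2 for level 1 with $|a|<1$, where the adversary must mix the two targets $\alpha=\pm1$. Finding the exact mixing that produces $\sqrt{(1+a^2)/2}$, and matching it with an input in Step 3, is delicate. I would first try to parametrize the worst disturbance as a combination $w=-\theta_+(\cdots)-\theta_-(\cdots)$ and solve the resulting two-variable quadratic minimax problem explicitly.
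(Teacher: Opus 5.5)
Your outline has the right architecture: explicit disturbances for the converse, and saturated periodic inputs with a difference-based energy certificate for achievability. However, the steps that produce the exact constants are either missing or would fail as planned. Below I use the normalization $\delta=\eps/(|b|\bar u)$ and $q_k=\operatorname{sgn}(b)u_k/\bar u$.

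\textbf{Achievability.} Strictly alternating signs are fatal for level 1 when $|a|<1$.
\begin{itemize}
\item With $q_k=(-1)^k$ and $w=0$ one finds $x_{k+1}+x_k=a^k$. So the unstabilizable model $(-1,0)$ fits the record with residual norm at most $(1-a^2)^{-1/2}$, which is below $\delta\sqrt T$ for large $T$. The input fails at every positive noise level.
\item The paper instead uses the period-four pattern $(1,1,-1,-1)$. Its power sits at $\pm\pi/2$, where the alternatives $(\pm1,0)$ are equally expensive.
\end{itemize}
For $|a|>1$, long periodic experiments cannot reach the level-1 ceiling at all.
\begin{itemize}
\item Theorem~\ref{thm:individual-limit} gives $\eps_1(T)\to|b|\bar u/(1+\sqrt{(1+a^2)/2})$, which is strictly below $|b|\bar u/2$.
\item The paper reaches every level below $|b|\bar u/2$ with one sample. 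The input $q_0=1$ gives $|x_1|\ge1-\delta>\delta$ while $x_0=0$, so no model with $\beta=0$ is consistent.
\item Your transient argument also breaks for $|a|>1$: the deviation from the periodic steady state grows like $a^k$. The paper never solves for trajectories. It lower-bounds $J_{c,\lambda,T}$ through \eqref{eq:dual}, whose finite-record corrections stay bounded.
\end{itemize}
Most importantly, for levels 2--3 you give no design beyond ``a sign pattern matched to $\operatorname{sgn}a$''. This is the heart of the theorem.
\begin{itemize}
\item By Lemma~\ref{lem:separation}, a common gain needs $\min_{|\lambda|\le1}\norm{x-\lambda y}>\delta\sqrt T$. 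One input must therefore defeat every direction $\lambda\in[-1,1]$ at once. The certificate must also pay for both the true disturbance and the fitted residual.
\item A single tone fails this. For $|a|<1$, a constant (resp.\ alternating) input gives $\norm{x-y}$ (resp.\ $\norm{x+y}$) at most $(1-a^2)^{-1/2}$, so it fails the separation test at every positive noise level.
\item The paper needs the binary mixture of Lemma~\ref{lem:design}, with $\int D_r/D_\lambda\dd\mu_*\le1$, plus the drift comparison \eqref{eq:drift-comparison}.
\end{itemize}

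\textbf{Converse.} The constant $s=\sqrt{(1+a^2)/2}$ cannot come from $\norm x\le\norm{x_+}$ and the triangle inequality. It comes from the exact identity \eqref{eq:endpoint}.
\begin{itemize}
\item Let $z$ be the noiseless response to $q$, with past and next rows $\xi,\eta$. Because $z_0=0$ and $|a|\le1$, the endpoint term is nonnegative. Hence $Q:=\norm q\ge sR$, where $R=\min_{\sigma=\pm1}\norm{\eta-\sigma\xi}$.
\item The input-proportional disturbance $v=-\gamma q$, with $\gamma=R/(Q+R)$, scales the trajectory by $1-\gamma$. It makes the disturbance and the $(\sigma,0)$-residual both equal to $QR/(Q+R)\le\sqrt T/(1+s)$, for every input and horizon.
\item Your single-target family $w=-\lambda((a-\sigma)x+bu)$ is not sharp. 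Take $a=0$ and $q=(1,1,-1,-1,\ldots)$. Its best member, $\lambda=1/2$, needs asymptotically $\delta\ge\sqrt{2/5}\approx0.63$, whereas $\delta\ge(1+1/\sqrt2)^{-1}\approx0.59$ already defeats that input.
\item Your two-parameter mixture does contain $v=-\gamma q$. Without \eqref{eq:endpoint}, though, you have no bound that is uniform over inputs.
\end{itemize}
For level 2 with $|a|>1$, the target $(a,0)$ is too expensive. Its consistency requires both $\norm v$ and $\norm{q+v}$ to be at most $\delta\sqrt T$. Against a saturated input this forces $\delta\ge1/2>1/(1+|a|)$.
\begin{itemize}
\item The needed obstruction is a pair of models with the same drift. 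The disturbance $v=-q/(1+|a|)$ makes both $(a,1)$ and $(a,(|a|-1)/(|a|+1))$ consistent, and their stabilizing-gain intervals are disjoint.
\item For $|a|\le1$, take $v=-q/2$ and the two models from the necessity part of Lemma~\ref{lem:separation}, using \eqref{eq:common-upper}.
\end{itemize}
Finally, for level 1 with $|a|\ge1$, the target should be $(a,0)$ itself with $v=-q/2$. With target $(\operatorname{sgn}a,0)$, the term $(a-\sigma)x$ is not harmless once $|a|>1$.
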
 \par
\begin{figure}[t] \centering\includegraphics[width=\columnwidth]{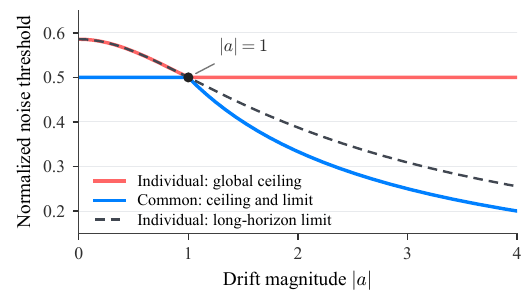} \caption{Exact noise thresholds, normalized by $|b|\bar u$. The solid curves are the global ceilings; the blue curve is also the common-certificate limit. The dashed curve is the individual long-horizon limit (Theorem~\ref{thm:individual-limit}). For $|a|>1$, optimizing the experiment length gives a strictly higher individual ceiling than using arbitrarily long experiments.} \label{fig:ceilings} \end{figure} \par
The three ceilings agree only at $a=\pm1$. At every other drift, individual stabilizability tolerates strictly more noise. For $|a|\ge1$, a single sample achieves every subcritical level-1 noise, and $\eps_1^\star/\eps_2^\star=(1+|a|)/2$ is unbounded. The individual long-horizon limit can be smaller than its global ceiling; Section~\ref{sec:long-horizon} identifies it and carries its obstruction into multidimensional systems. \par
\emph{Proof outline.} First, we show how a disturbance can hide a bad model or prevent the consistent models from sharing a gain. These constructions explain the ceilings and their unattainable endpoints. To prove achievability, a quadratic energy test separates the cost of creating a trajectory from the cost of fitting a bad explanation. Periodic inputs make this test a frequency calculation with only bounded endpoint errors. We treat the simple individual requirement first, then balance all directions needed by a common gain. The same estimate also yields the common limit over all integer horizons. A final closed-trajectory construction gives the individual limit and embeds the same adverse scalar record in a real mode of a larger plant. \par
\section{Record Geometry and Impossibility} \label{sec:converse} The upper bounds must defeat every input. The key is that the adversary can spend noise both on changing the observed trajectory and on making an incompatible model fit that trajectory. We normalize states and disturbances by $|b|\bar u$ and set $\delta=\eps/(|b|\bar u)$, giving \begin{equation} x_{k+1}=ax_k+q_k+v_k,\quad|q_k|\le1,\quad\norm{v}_2\le\delta\sqrt T. \label{eq:normalized} \end{equation} Every such $q$ is implemented by $u_k=\bar u\,\operatorname{sgn}(b)q_k$. This invertible change of variables preserves all three requirements. In this section $x=(x_0,\ldots,x_{T-1})^\top$, $y=(x_1,\ldots,x_T)^\top$, and $q=(q_0,\ldots,q_{T-1})^\top$. All unmarked norms are Euclidean. For $d\ge0$, write $\Sigma_d=\{(\alpha,\beta)\in\R^2:\norm{y-\alpha x-\beta q}\le d\}$. \par
\begin{lemma}[A scalar separation test] \label{lem:separation} If $\Sigma_d$ is nonempty, it has a common gain if \begin{equation} \min_{|\lambda|\le1}\norm{x-\lambda y}>d. \label{eq:separation} \end{equation} If $y\in\operatorname{span}\{x,q\}$, this condition is also necessary. \end{lemma}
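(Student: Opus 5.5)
The plan is to translate the existence of a \emph{bad} model for a fixed gain into a statement about the vectors $x$, $y$, $q$, and then to use a separating hyperplane. Fix $K$ and set $v_K=q-Kx$. A model $(\alpha,\beta)\in\Sigma_d$ is bad for $K$ if $\mu=\alpha+\beta K$ satisfies $|\mu|\ge1$. Substituting $\alpha=\mu-\beta K$ gives
\[
y-\alpha x-\beta q = y-\mu x-\beta v_K .
\]
Put $\lambda=1/\mu\in[-1,1]\setminus\{0\}$ and $\gamma=\beta/\mu$. Multiplying the residual by $\lambda$ shows that a bad model exists if and only if
\[
\norm{x-\lambda y+\gamma v_K}\le|\lambda|\,d
\]
for some $0<|\lambda|\le1$ and some $\gamma\in\R$. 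Every step of this reduction is reversible, so it serves both directions.

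\emph{Sufficiency.} Assume \eqref{eq:separation}. The segment $S=\{x-\lambda y:|\lambda|\le1\}$ is compact, convex, and disjoint from the closed ball of radius $d$. Let $z$ be the point of $S$ nearest the origin and take $h=z/\norm z$. Then $\norm z>d$, and the projection characterization of $z$ gives $\inner{h}{s}\ge\norm z>d$ for every $s\in S$. Taking $\lambda=0$ yields $\inner hx>d\ge0$, so the gain $K=\inner hq/\inner hx$ is well defined and gives $\inner h{v_K}=0$. For this $K$, any candidate bad model satisfies
\[
\inner h{x-\lambda y+\gamma v_K}=\inner h{x-\lambda y}>d\ge|\lambda|d,
\]
which contradicts $\norm{x-\lambda y+\gamma v_K}\le|\lambda|d$. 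Hence $K$ is a common gain. The nonemptiness hypothesis is not needed in this direction.

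\emph{Necessity.} Assume $y\in\operatorname{span}\{x,q\}$ and let $\lambda^\ast$ attain the minimum, so that $\norm{x-\lambda^\ast y}\le d$. Everything then takes place in a plane $P$ of dimension at most two. For an arbitrary $K$ I must produce $\lambda$ and $\gamma$ satisfying the inequality above. The set of achievable vectors $x-\lambda y+\gamma v_K$ is the union over $|\lambda|\le1$ of lines in $P$ with the common direction $v_K$. I will split into cases.
\begin{enumerate}
\item If $v_K$ is not parallel to $y$, then $x-\lambda y+\gamma v_K$ ranges over all of $P$ as $\lambda$ and $\gamma$ vary, and $x-\lambda y+\gamma v_K=0$ is solvable. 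If the solution has $0<|\lambda|\le1$, the residual is zero and the model is bad.
\item If the solution has $|\lambda|>1$, I will instead take $\lambda=\pm1$ with the sign of that solution. Because the achievable vectors depend affinely on $\lambda$, I expect this to bring the line within distance $d$ of the origin, and I will compare it with $\norm{x\mp y}\ge\norm{x-\lambda^\ast y}$ to check the bound.
\item If the solution has $\lambda=0$, meaning $x\in\operatorname{span}\{v_K\}$, or if $v_K\parallel y$, I will use the nonemptiness of $\Sigma_d$ and perturb $\lambda$ slightly away from $0$ by continuity. A model in $\Sigma_d$ of the form $(\alpha,\beta)$ with large $|\alpha+\beta K|$ should then appear.
\end{enumerate}

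The main obstacle is this case analysis in the necessity direction. In particular, I have to show that the right-hand side $|\lambda|d$, which shrinks as $\lambda\to0$, never blocks the construction when only $\norm{x-\lambda^\ast y}\le d$ is known with $|\lambda^\ast|<1$. I would first try to handle this with an explicit two-dimensional computation in coordinates adapted to $\{x,q\}$, treating $y=\xi x+\eta q$ directly. The degenerate cases where $x$ and $q$ are parallel would be dealt with separately.
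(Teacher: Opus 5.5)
Your sufficiency argument is correct and matches the paper's. Your $h$ is the paper's $x-\lambda_*y$ normalized, your gain $K=\inner hq/\inner hx$ is the same, and your hyperplane check in the reformulated variables replaces the paper's direct bound on $|\alpha+\beta K|$.

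\textbf{The gap is in the necessity direction.} You leave it as an unfinished case analysis, and the plan you sketch would not close as stated.
\begin{enumerate}
\item In your third item you propose to move $\lambda$ slightly away from $0$ and invoke nonemptiness of $\Sigma_d$. Nonemptiness gives nothing here: $(a_0,b_0)$ fits with zero residual whenever $y=a_0x+b_0q$. Moreover, small $|\lambda|$ is the wrong direction. Take $x,q$ independent and a $K$ with $v_K\parallel y\ne0$. Your condition then reads $\operatorname{dist}(x,\operatorname{span}v_K)\le|\lambda|d$ with a fixed positive left side. It fails for small $|\lambda|$ and must be met at $\lambda=\pm1$, where it holds because $\operatorname{dist}(x,\operatorname{span}v_K)\le\norm{x-\lambda^\ast y}\le d$.
\item In your second item, the comparison you name, $\norm{x\mp y}\ge\norm{x-\lambda^\ast y}$, is a lower bound and proves nothing. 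What is needed is that $\operatorname{dist}(x-\lambda y,\operatorname{span}v_K)=|\lambda_0-\lambda|\operatorname{dist}(y,\operatorname{span}v_K)$ is minimized over $[-1,1]$ at $\lambda=\operatorname{sgn}\lambda_0$. Its value there is therefore at most its value at $\lambda^\ast$, which is at most $d$.
\item The degenerate configurations ($x\parallel q$, $x=0$, $v_K=0$) are deferred entirely.
\end{enumerate}
So necessity is not established.

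The missing idea, which removes the need for any case split, is to build two consistent models once from $\lambda^\ast$, instead of searching for a bad model separately for each $K$. Writing $y=a_0x+b_0q$, the paper takes
\[
(\alpha_\pm,\beta_\pm)=\bigl(a_0\pm(1-\lambda^\ast a_0),\ b_0(1\mp\lambda^\ast)\bigr).
\]
Their residuals are $\mp(x-\lambda^\ast y)$, so both lie in $\Sigma_d$. For any $K$, put $t=a_0+b_0K$. The closed-loop coefficients are then $1+(1-\lambda^\ast)t$ and $-1+(1+\lambda^\ast)t$. When $|\lambda^\ast|<1$, stability of the first forces $t<0$ and stability of the second forces $t>0$. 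At $\lambda^\ast=\pm1$, one coefficient equals $\pm1$. This single construction covers every gain and every degenerate arrangement of $x$, $q$, $y$ at once.
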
 \Proof For sufficiency, let $h=x-\lambda_*y$ at a minimum in \eqref{eq:separation}. Projection onto the segment gives $(\lambda-\lambda_*)h^\top y\le0$ for all $|\lambda|\le1$. Hence $\lambda_*h^\top y=|h^\top y|$ and $h^\top x=\norm h^2+|h^\top y|>0$. Choose $K=h^\top q/(h^\top x)$. For every consistent pair, $y=\alpha x+\beta q+e$ with $\norm e\le d$ gives \[ |\alpha+\beta K| \le\frac{|h^\top y|+d\norm h}{|h^\top y|+\norm h^2}<1. \] For necessity, let $|\lambda|\le1$ satisfy $\norm{x-\lambda y}\le d$ and write $y=a_0x+b_0q$. The two models \[ (\alpha_\pm,\beta_\pm) =\bigl(a_0\pm(1-\lambda a_0),\ b_0(1\mp\lambda)\bigr) \] have residuals $\mp(x-\lambda y)$, so both are consistent. For a gain $K$, put $t=a_0+b_0K$. The closed-loop coefficients $1+(1-\lambda)t$ and $-1+(1+\lambda)t$ cannot both be stable: for $-1<\lambda<1$ this requires both $t<0$ and $t>0$; at $\lambda=\pm1$ one coefficient is fixed at $\pm1$. \Endproof\par
We first prove the upper bound in \eqref{eq:common-ceiling}. For $|a|\le1$ choose $v=-q/2$. Then $y=ax+q/2$ and \begin{equation} \norm{x-ay}^2 =\tfrac14\norm q^2-(1-a^2)x_T^2\le T/4. \label{eq:common-upper} \end{equation} For $|a|>1$ choose $v=-q/(1+|a|)$ and $\lambda=1/a$. Since $x-\lambda y=-(q+v)/a$, its norm is at most $\sqrt T/(1+|a|)$. These proportional disturbances put $y$ in $\operatorname{span}\{x,q\}$, so Lemma~\ref{lem:separation} proves failure at the claimed thresholds, including zero input and dependent state and input rows. \par
Both individual upper bounds use one identity. For any trajectory $z_0,\ldots,z_T$, let $\xi,\eta$ denote its past and next state rows. Expansion gives \begin{equation} \begin{split} \norm{\eta-a\xi}^2={}& \tfrac{(1+a)^2}{4}\norm{\eta-\xi}^2 +\tfrac{(1-a)^2}{4}\norm{\eta+\xi}^2\\ &{}+\tfrac{1-a^2}{2}(z_T^2-z_0^2). \end{split} \label{eq:endpoint} \end{equation} The two weights sum to $s^2$, where $s=\sqrt{(1+a^2)/2}$. Thus generation energy controls the cost of fitting a unit-drift model, up to an endpoint term. \par
For $|a|\le1$, take the noiseless response to $q$, with $z_0=0$, and put $Q=\norm q$ and $R=\min_{\sigma=\pm1}\norm{\eta-\sigma\xi}$. The endpoint term is nonnegative, so \eqref{eq:endpoint} gives $Q^2\ge s^2R^2$. For $Q>0$, choose a minimizing $\sigma$ and use $v=-\gamma q$, where $\gamma=R/(Q+R)$. Its norm and the residual of the unstabilizable alternative $(\sigma,0)$ are both $QR/(Q+R)$, which is at most $\sqrt T/(1+s)$. If $Q=0$, the zero record already admits $(1,0)$. For $|a|\ge1$, $v=-q/2$ makes the unstabilizable pair $(a,0)$ consistent. The same counterexamples remain valid at larger noise budgets, proving all endpoint assertions. \par
\par
\section{Experiments That Reach Every Subcritical Noise Level} \label{sec:construction} We now construct inputs that work against \emph{every} disturbance. The main device measures the combined cost of producing a trajectory and making it fail the separation test. Keeping the dynamics in difference form makes this cost easy to estimate on periodic inputs: only the first and last samples distinguish a finite record from a cycle. \par
\subsection{A certificate using one-step differences} Let $S$ be the backward shift on $\R^T$, with $(Sz)_0=0$ and $(Sz)_k=z_{k-1}$. Since $x=Sy$, set \[ F=I-aS,\quad G=S-\lambda I,\quad v=Fy-q,\quad x-\lambda y=Gy. \] Both operators involve only one-step differences. For $c>0$, define \begin{equation} J_{c,\lambda,T}(q) =\min_y\bigl\{\norm{Fy-q}^2+c\norm{Gy}^2\bigr\}. \label{eq:energy} \end{equation} If both the true disturbance and a separation obstruction have norm at most $\delta\sqrt T$, then $J_{c,\lambda,T}(q)\le(1+c)\delta^2T$. Thus \begin{equation} J_{c,\lambda,T}(q)>(1+c)\delta^2T \quad\text{for all }|\lambda|\le1 \label{eq:sufficient} \end{equation} guarantees a common gain. This test accounts for both noise budgets. \par
The two residuals cannot be chosen independently. Writing $e=Gy$, commutation $FG=GF$ gives $Fe-Gv=Gq$. Pair with $h$ and apply Young's inequality to the two terms. Minimizing over $y$ bounds the energy without solving for the trajectory, for any $h\in\R^T$: \begin{equation} J_{c,\lambda,T}(q)\!\ge\!2\inner{Gq}{h}-\norm{G^\top h}^2 -c^{-1}\norm{F^\top h}^2. \label{eq:dual} \end{equation} A suitable periodic $h$ makes this bound sharp apart from endpoint terms. \par
For an input of period $p$, define the Fourier coefficients \[ \widehat q_j=\frac1p\sum_{k=0}^{p-1}q_ke^{-2\pi i jk/p}, \quad0\le j<p. \] Its power measure $\mu$ places mass $|\widehat q_j|^2$ at $\omega_j=2\pi j/p$. Each integral below is therefore a finite weighted sum; if $q_k\in\{-1,1\}$, its total mass is one. Set $D_t(\omega)=1+t^2-2t\cos\omega$. \par
\begin{lemma}[Periodic inputs on finite records] \label{lem:periodic} Fix $a\in\R$, $c>0$, and a real periodic input $q$. Suppose $D_a$ is positive at every frequency with nonzero input power. Then its length-$T$ prefix satisfies \begin{equation} J_{c,\lambda,T}(q)\ge T\int\frac{cD_\lambda}{D_a+cD_\lambda}\dd\mu-C, \quad|\lambda|\le1, \label{eq:periodic} \end{equation} where $C$ is independent of $T$ and $\lambda$. \end{lemma}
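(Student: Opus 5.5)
We will use the dual bound \eqref{eq:dual} with a test vector $h$ built from the same period as $q$. Think of $F$ and $G$ as truncations of circulant operators. On the infinite periodic sequence, $F$ acts on the frequency $\omega$ as multiplication by $1-ae^{-i\omega}$, with modulus squared $D_a(\omega)$. Likewise $G$ acts as multiplication by $e^{-i\omega}-\lambda$, with modulus squared $D_\lambda(\omega)$. The adjoints act by the conjugate symbols.

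In the periodic model, maximizing $2\inner{Gq}{h}-\norm{G^\top h}^2-c^{-1}\norm{F^\top h}^2$ over $h$ decouples frequency by frequency. At frequency $\omega_j$ the optimizer is
\[
\widehat h_j=\frac{g(\omega_j)\,\widehat q_j}{D_\lambda(\omega_j)+c^{-1}D_a(\omega_j)},
\]
where $g$ is the symbol of $G$. The corresponding value per unit time is
\[
\frac{D_\lambda|\widehat q_j|^2}{D_\lambda+c^{-1}D_a}=\frac{cD_\lambda}{D_a+cD_\lambda}\,|\widehat q_j|^2 .
\]
The hypothesis $D_a>0$ on the support of $\mu$ keeps the denominator positive wherever $\widehat q_j\ne0$, even when $D_\lambda$ vanishes, which happens for $\lambda=\pm1$ at $\omega=0,\pi$. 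At those frequencies we set $\widehat h_j=0$, and the integrand is $0$ there anyway. So we define $h$ as the real $p$-periodic sequence with these Fourier coefficients, restricted to indices $0,\ldots,T-1$.

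Next we compare the finite quantities with their periodic versions. Since $h$ and $q$ are periodic, $Gq$, $G^\top h$ and $F^\top h$ agree with the circulant outputs everywhere except at the boundary entries. The backward shift $S$ differs from the cyclic shift only at index $0$, and $S^\top$ only at index $T-1$. Each inner product and squared norm therefore equals its periodic per-sample value times $T$, plus an error from $O(1)$ boundary entries and from the incomplete last period. Summing over whole periods and applying Parseval, the main term is exactly $T\int \frac{cD_\lambda}{D_a+cD_\lambda}\dd\mu$, up to an additive error bounded by constant multiples of $\max_k|h_k|^2$, $\max_k|h_k||q_k|$ and $p\max_k(|h_k|^2+|q_k|^2)$.

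The main obstacle is uniformity in $\lambda$: the constant $C$ must not depend on $\lambda\in[-1,1]$. We will show that $\widehat h_j$ is bounded uniformly. The numerator satisfies $|g|\le 2$. The denominator is at least $c^{-1}\min_{\mathrm{supp}\,\mu}D_a>0$, a bound that does not involve $\lambda$. Hence $\sup_k|h_k|\le M$, with $M$ depending only on $a$, $c$ and $q$. Because $|\lambda|\le1$, the operator norms of $F$ and $G$ are bounded by $1+|a|$ and $2$, so every boundary and incomplete-period contribution is bounded by a $\lambda$-free constant times $p$. Taking $C$ as the sum of these bounds and substituting into \eqref{eq:dual} gives \eqref{eq:periodic}.

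The one delicate point is the near-singular frequencies at $\lambda=\pm1$, where the optimal coefficient formula is discontinuous. We handle it by the choice above: keep the formula wherever $\widehat q_j\ne0$, which is legitimate because the denominator is bounded below by the $D_a$ term alone, and set $\widehat h_j=0$ elsewhere.
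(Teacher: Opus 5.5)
Your proposal is correct and follows essentially the same route as the paper. Your dual vector $\widehat h_j=g(\omega_j)\widehat q_j/(D_\lambda+c^{-1}D_a)$ is exactly the paper's $c(e^{-i\omega_j}-\lambda)\widehat q_j/(D_a+cD_\lambda)$, and both arguments evaluate it by Parseval on the cyclic problem, confine the finite-record discrepancies to the first coordinate of $Gq$, the last coordinates of $F^\top h$ and $G^\top h$, and the incomplete last period, and get uniformity in $\lambda$ from the $\lambda$-free lower bound given by $D_a>0$ on the finitely many active frequencies. One small correction: the coefficient formula is actually continuous in $\lambda$ on the active frequencies, so the only coefficients that need to be set to zero by hand are the inactive ones.
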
 \Proof On one period choose the dual vector with Fourier coefficients \[ \widehat h_j= \frac{c(e^{-i\omega_j}-\lambda)} {D_a(\omega_j)+cD_\lambda(\omega_j)}\,\widehat q_j, \] setting inactive coefficients to zero. In the cyclic version of \eqref{eq:dual}, Parseval's identity gives the integral in \eqref{eq:periodic} per sample. There are finitely many active frequencies and $D_a>0$ at each, so all entries of this periodic $h$ are bounded uniformly in $\lambda$. Repeat $h$ and take its first $T$ entries. Relative to the cyclic expressions, $Gq$ changes only in its first coordinate, while $F^\top h$ and $G^\top h$ change only in their last coordinates. Their contribution to \eqref{eq:dual} is bounded independently of $T,\lambda$. The incomplete last period also has bounded length and bounded contribution. This proves the assertion. \Endproof The hypothesis is automatic for $a\ne\pm1$. At $a=1$ it requires zero mean; at $a=-1$ it excludes power at $\pi$. In particular, the lemma covers every integer horizon, including incomplete periods. \par
\subsection{Individual stabilizability: a four-sample pattern} To ensure individual stabilizability, it is enough to exclude unstable alternatives with zero input coefficient. For any real $a$, repeat $(1,1,-1,-1)$ and put \[ s=\sqrt{(1+a^2)/2},\qquad\delta_1=(1+s)^{-1}. \] At the two active frequencies $D_a=2s^2$ and $D_\lambda=1+\lambda^2$. The function $f(t)=(1+t)/(2s+1+t)$ is concave on $[0,1]$. Its endpoint values satisfy $f(0)\ge(1+s)^{-2}$ and $f(1)=(1+s)^{-1}$, so its chord gives, for $t=\lambda^2$, \[ \frac{1+t}{2s+1+t} \ge(1-t)f(0)+tf(1)\ge\frac{1+st}{(1+s)^2}. \] Lemma~\ref{lem:periodic}, with $c=s$, therefore gives $J_{s,\lambda,T}(q)\ge T(1+s\lambda^2)\delta_1^2-C$. An unstabilizable alternative has $\beta=0$, $|\alpha|\ge1$. With $\lambda=1/\alpha$, its residual bound implies $\norm{x-\lambda y}\le|\lambda|\delta\sqrt T$, hence $J_{s,\lambda,T}(q)\le T(1+s\lambda^2)\delta^2$. For $\delta<\delta_1$, these bounds contradict one another whenever $T>C/(\delta_1^2-\delta^2)$, uniformly in $\lambda$. This proves an all-integer-horizon lower bound for every drift, which we will also use in Section~\ref{sec:long-horizon}. For $|a|<1$, it reaches the global individual ceiling. \par
For $|a|\ge1$, a single sample is enough below $1/2$: the input $q_0=1$ gives $|x_1|\ge1-\delta>\delta$. Because $x_0=0$, no zero-channel model can fit that record. This completes the lower bound in \eqref{eq:individual-ceiling}. \par
\subsection{Common certification: distributing input power} A common gain must handle all separation directions $\lambda$ simultaneously. Lemma~\ref{lem:periodic} turns this requirement into a choice of input power across frequencies. The next lemma supplies that choice using only inputs at the amplitude limits. We first balance power for a stable reference drift, then compare every plant with such a reference. A small amount of slack below the ceiling lets one construction cover stable, unstable, and unit drifts. \par
\begin{lemma}[Bounded spectral design] \label{lem:design} For every $0\le r<1$ and $\kappa>0$, there is a deterministic periodic input $q_k\in\{-1,1\}$ with zero mean over its period whose power measure satisfies, for $|\lambda|\le1$ and $0\le\tau\le1$, \begin{equation} \int\frac{\tau D_\lambda}{D_r+\tau D_\lambda}\dd\mu\ge\frac{\tau}{1+\tau}-\kappa. \label{eq:spectral-design} \end{equation} \statementdisplayend\end{lemma}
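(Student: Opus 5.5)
The plan is to first find a \emph{continuous} target measure $m$ on $[0,2\pi)$ for which \eqref{eq:spectral-design} holds with $\kappa=0$. I would then approximate $m$ by the power measure of an admissible $\pm1$, zero-mean periodic sequence, and absorb the approximation error into $\kappa$. It is convenient to rewrite the integrand as
\[
\frac{\tau D_\lambda}{D_r+\tau D_\lambda}=1-\frac{D_r}{D_r+\tau D_\lambda},
\]
so the requirement becomes
\[
\int\frac{D_r}{D_r+\tau D_\lambda}\dd\mu\le\frac1{1+\tau}+\kappa .
\]
I would first try the normalized Lebesgue measure $m=d\omega/2\pi$. The heuristic is Jensen's formula: $\int\log D_t\,dm=0$ for $|t|\le1$, so $D_\lambda/D_r$ has geometric mean one. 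Also, at $\lambda=r$ the inequality holds with equality for every $\tau$. Spot checks, for example $r=0.5$, $\lambda=\pm1$, $\tau=1$, give $0.455$ and $0.467$, both below $1/2$. The Poisson kernel at $r$ is the other natural candidate, but it fails near $\lambda=1$ when $r>0$, so I would commit to the uniform measure.

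The key step, and the one I expect to be the main obstacle, is verifying $\int D_r/(D_r+\tau D_\lambda)\,dm\le 1/(1+\tau)$ for all $0\le r<1$, $|\lambda|\le1$, $0\le\tau\le1$. The computation goes as follows.
\begin{enumerate}
\item Write $D_r+\tau D_\lambda=A-2B\cos\omega$ with $A=1+r^2+\tau(1+\lambda^2)$ and $B=r+\tau\lambda$.
\item When $B\ne0$, use $D_r=(r/B)(A-2B\cos\omega)+(1+r^2-rA/B)$.
\item Apply the standard evaluation $\int dm/(A-2B\cos\omega)=(A^2-4B^2)^{-1/2}$, noting that $A^2-4B^2=\bigl((1-r)^2+\tau(1-\lambda)^2\bigr)\bigl((1+r)^2+\tau(1+\lambda)^2\bigr)$.
\end{enumerate}
This turns the claim into an explicit algebraic inequality in $(r,\lambda,\tau)$. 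I would clear the square root, which requires keeping track of signs, and try to factor the difference, using $\lambda=r$ as a known double root. If the direct inequality resists, the fallback is to use the slack $\kappa$ together with a minimax argument over a finite mixture of measures. The integrand is jointly continuous on the compact parameter set because $D_r\ge(1-r)^2>0$, so Sion's theorem reduces the problem to checking the inequality against each probability measure on $(\lambda,\tau)$.

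For realizability, I would take a maximal-length binary sequence $s$ of period $p=2^n-1$. Its power spectrum is flat: mass $(p+1)/p^2$ at each nonzero frequency and $1/p^2$ at zero. I would then form the antiperiodic sequence $q=(s,-s)$ of period $2p$. It takes values in $\{-1,1\}$, has zero mean over its period, and carries total mass one spread equally over the $p$ odd harmonics $\omega=\pi(2j+1)/p$. The remaining step is a Riemann-sum argument. The integrand is Lipschitz in $\omega$ uniformly in $(\lambda,\tau,r)$ for fixed $r<1$, with constant of order $(1-r)^{-2}$. Hence the discrete integral against this near-uniform measure differs from $\int\cdot\,dm$ by $O(1/p)$, uniformly over $|\lambda|\le1$ and $0\le\tau\le1$. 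Choosing $n$ large enough that this error is below $\kappa$ completes the argument.
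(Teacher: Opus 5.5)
\textbf{Gap: the uniform measure is not a valid target once $r>0$.} The step you flag as the main obstacle is not just hard; it is false for every $r>0$. Take $\lambda=0$, so $D_0\equiv1$. With $dm=d\omega/2\pi$,
\[
\int\frac{\tau D_0}{D_r+\tau D_0}\,dm=\frac{\tau}{\sqrt{(1+\tau+r^2)^2-4r^2}}.
\]
This is at least $\tau/(1+\tau)$ if and only if $r^2(r^2+2\tau-2)\le0$, i.e.\ $\tau\le1-r^2/2$. At $\tau=1$ the left side equals $(4+r^4)^{-1/2}<1/2$. The deficit is about $0.004$ at $r=1/2$ and tends to $1/2-1/\sqrt5\approx0.053$ as $r\to1$.

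Your own partial-fraction route shows the whole picture at $\tau=1$. With $S=r+\lambda$ and $Q=2+r^2+\lambda^2$,
\[
\int\frac{D_r}{D_r+D_\lambda}\,dm-\frac12=\frac{r-\lambda}{2S}\Bigl(1-\frac{2(1-r\lambda)}{\sqrt{Q^2-4S^2}}\Bigr),\qquad Q^2-4S^2-4(1-r\lambda)^2=S^2(r-\lambda)^2 .
\]
The bracket is therefore nonnegative, and the inequality fails on the whole interval $-r<\lambda<r$. Your spot checks at $\lambda=\pm1$ lie outside that interval. Your Riemann-sum step shows the realized integrals converge to these uniform values, so taking longer sequences only confirms failure once $\kappa$ is below the deficit. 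This is not an artifact of allowing $\tau$ up to $1$. In the application $r=a/M^2$ and $\tau=1/M$; for $a=1$ with $M$ near $1$, which is needed near the ceiling, you are inside the failing band $\tau>1-r^2/2$.

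\textbf{The missing idea.} The uniform measure satisfies the wrong one-sided condition. It gives $\int D_\lambda/D_r\,dm=1+(\lambda-r)^2/(1-r^2)\ge1$, which controls only the first-order term in $\tau$. But $\int D_r/D_\lambda\,dm=(1+r^2-2r\lambda)/(1-\lambda^2)>1$ for every $\lambda\ne r$. What yields the bound for all $\tau$ at once is the reverse condition $\int D_r/D_\lambda\,d\mu\le1$ for $|\lambda|<1$. Since $Y\mapsto Y/(Y+\tau)$ is concave, Jensen (the arithmetic--harmonic mean step) then gives $\int D_r/(D_r+\tau D_\lambda)\,d\mu\le1/(1+\tau)$.

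The real content of the lemma is producing such a spectrum that is also the power measure of a $\pm1$ sequence. The paper does this with the explicit binary family obtained by reading and reversing one of $N$ signs, whose transforms $\int D_z^{-1}\dd\mu_N$ are rational. It mixes the members $N=\lfloor\nu\rfloor$ and $N+1$ so that the ratio integral touches $1$ exactly at $\lambda=r$, and then realizes the mixture by word counts.

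\textbf{Why the Sion fallback does not close the gap.} The admissible spectra are not all probability measures on the circle, only those of $\pm1$ sequences. These obey extra constraints; for instance $2R_1+R_2\ge-1$ rules out mass concentrated at $\pm2\pi/3$. So the best response to a mixture over $(\lambda,\tau)$ cannot be taken to be a single frequency. The dual statement you would need is left unproved and is as hard as the original.

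\textbf{A secondary error.} The antiperiodic extension $(s,-s)$ of an m-sequence is not flat on the odd harmonics; $p=3$ gives masses $4/9,1/9,4/9$. A zero-mean near-white binary input could be built differently, but that is moot given the failure above.
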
 The construction and proof are in the Appendix. Its idea is to generate a family of binary correlations by repeatedly reading and reversing one of finitely many signs. Mixing two adjacent family members balances the critical direction. Finite word counts then realize the resulting power distribution to any required accuracy. Randomness is only a device for describing those counts: the final periodic input is fixed before the disturbance is chosen. Its period need not be short, which is consistent with the benchmark's optimization over all experiment lengths. \par
\subsection{One construction for every drift} It suffices to treat $a\ge0$: the transformation $x_k\mapsto(-1)^kx_k$, $(q_k,v_k)\mapsto(-1)^{k+1}(q_k,v_k)$ reverses the drift without changing either budget or any certification requirement. Fix $\delta<1/(1+\max\{1,a\})$. Choose $M>\max\{1,a\}$ close enough that $\delta<1/(1+M)$. The stable reference $r=a/M^2<1$ satisfies the pointwise comparison \begin{equation} M^2D_r-D_a=(M^2-1)(1-a^2/M^2)>0. \label{eq:drift-comparison} \end{equation} Replacing $D_a$ by this larger quantity lowers the spectral energy, so a design for the reference also certifies the actual plant. \par
Choose $\kappa>0$ such that $\kappa<1/(1+M)-(1+M)\delta^2$, and take the input from Lemma~\ref{lem:design}. At $a=1$, its exact zero mean removes the only zero of $D_a$ from the active frequencies. Thus Lemmas~\ref{lem:periodic}--\ref{lem:design}, with $c=M$ and $\tau=1/M$, give for all $|\lambda|\le1$ \begin{align*} J_{M,\lambda,T}(q) &\ge T\int\frac{MD_\lambda}{D_a+MD_\lambda}\dd\mu-C\\ &\ge T\int\frac{M^{-1}D_\lambda}{D_r+M^{-1}D_\lambda}\dd\mu-C\\ &\ge T\bigl(1/(1+M)-\kappa\bigr)-C. \end{align*} The coefficient of $T$ exceeds $(1+M)\delta^2$ by a fixed positive amount. For sufficiently large $T$, it absorbs the bounded endpoint loss $C$, proving \eqref{eq:sufficient}. \par
For every fixed subcritical common-certification noise, we have chosen one periodic input whose prefixes work at \emph{all sufficiently large integer horizons}. Hence the normalized thresholds have liminf at least $1/(1+\max\{1,|a|\})$. The all-horizon converse gives the matching limsup. Restoring the scale $|b|\bar u$ proves \eqref{eq:limit} and completes Theorem~\ref{thm:main}. \par
\section{Long Experiments and Multidimensional Obstructions} \label{sec:long-horizon} The common ceiling describes both the best experiment length and the limit of long experiments. Individual stabilizability behaves differently. We first complete its asymptotic picture, then show that the same obstruction survives in a real mode of a multidimensional plant. \par
\begin{theorem}[The individual long-horizon limit] \label{thm:individual-limit} For every $a\in\R$, $b\ne0$, and $\bar u>0$, \begin{equation} \lim_{T\to\infty}\eps_1(T) =\frac{|b|\bar u}{1+\sqrt{(1+a^2)/2}}. \label{eq:individual-limit} \end{equation} \statementdisplayend\end{theorem}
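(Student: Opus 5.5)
The limit in \eqref{eq:individual-limit} will follow from matching bounds: a liminf coming from the construction already given, and a limsup coming from a new converse that is valid at every large horizon. Everything is done in the normalized variables \eqref{eq:normalized}, with $\delta_1=(1+s)^{-1}$ and $s=\sqrt{(1+a^2)/2}$. For the lower bound, the four-sample pattern $(1,1,-1,-1)$ of Section~\ref{sec:construction} was shown to succeed at level~1 for every $\delta<\delta_1$ and every $T>C/(\delta_1^2-\delta^2)$, for every real drift. Hence $\liminf_{T\to\infty}\eps_1(T)\ge|b|\bar u\,\delta_1$. So the real task is $\limsup_T\eps_1(T)\le|b|\bar u\,\delta_1$. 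For $|a|<1$ this is immediate, because the all-horizon converse of Section~\ref{sec:converse} already caps every $\eps_1(T)$ by $\delta_1$. What remains is $|a|\ge1$, where $\eps_1(T)$ can reach $1/2>\delta_1$ at small $T$.

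For $|a|\ge1$, I would rerun the converse built on \eqref{eq:endpoint}, but on a \emph{closed} trajectory $z_0=z_T=0$, so that the endpoint term vanishes whatever the sign of $1-a^2$. For $|a|=1$ the endpoint weight is already zero, so the plain noiseless response works. For $|a|>1$, given any input $q$ and a scale $\theta\in[0,1]$ to be fixed, define the reference trajectory by backward recursion: $z_T=0$ and $z_k=(z_{k+1}-\theta q_k)/a$. Because $|a|>1$ this recursion is contractive, so $|z_k|\le\theta/(|a|-1)$ uniformly in $k$ and $T$. The realized record is then $x_0=0$, $x_k=z_k$ for $k\ge1$. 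This satisfies $x_{k+1}-ax_k=\theta q_k$ for $k\ge1$, and at $k=0$ it makes an error of size $|a z_0|=O(1)$. The adversary's disturbance is therefore $v=-(1-\theta)q+e_0$, where $e_0$ is supported on the first sample and has norm bounded independently of $T$. The record is now a closed trajectory with generation residual $\theta q$ up to one $O(1)$ term.

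Next I apply \eqref{eq:endpoint} to this record (the endpoint term is $0$), with $Q=\norm{q}$ and $R=\min_{\sigma=\pm1}\norm{\eta-\sigma\xi}$. This gives $\theta^2Q^2+O(Q)\ge s^2R^2$, absorbing the perturbation of $\eta-a\xi$ at index $0$ by the triangle inequality. I then choose $\theta=Q/(Q+R)$, the analogue of $1-\gamma$ in Section~\ref{sec:converse}, so that the disturbance norm $(1-\theta)Q+O(1)$ and the residual of the unstabilizable alternative $(\sigma,0)$, namely $\norm{\eta-\sigma\xi}$, are both about $QR/(Q+R)$. There is a circularity to handle here: $R$ is computed from the trajectory, and the trajectory depends on $\theta$. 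I would break it by noticing that $z$ is linear in $\theta$, so $R=\theta R_1$ with $R_1$ the value at $\theta=1$. Then $\theta=Q/(Q+R_1)$ is explicit, and the balance $QR_1/(Q+R_1)\le Q/(1+s)+O(1)\le\sqrt T/(1+s)+O(1)$ follows from $Q^2+O(Q)\ge s^2R_1^2$. Degenerate inputs with $Q$ small (e.g.\ $Q=0$) are covered by the model $(1,0)$, exactly as before.

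Consequently, every length-$T$ input fails whenever $\delta\sqrt T\ge\sqrt T/(1+s)+C'$, which gives $\eps_1(T)\le|b|\bar u\,(\delta_1+C'/\sqrt T)$. Together with the liminf this yields \eqref{eq:individual-limit}. I expect the main obstacle to be the bookkeeping of the $O(1)$ first-sample correction. It has to be bounded uniformly in $T$ and in the input, and it has to enter both the disturbance norm and the fitted residual additively, not multiplicatively. The same contraction estimate $|z_0|\le1/(|a|-1)$ should cover both. A secondary point to check is that a bad alternative $(\sigma,0)$ with $\sigma=\pm1$ is indeed unstabilizable, which holds because $\beta=0$ and $|\alpha|=1$.
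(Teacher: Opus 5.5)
Your proposal is correct and follows the paper's proof. The liminf comes from the four-sample pattern. The limsup comes from a trajectory closed at both ends, built by backward recursion from $z_T=0$ with the initial state reset, so the endpoint term in \eqref{eq:endpoint} vanishes and both the disturbance and the residual of the unstabilizable alternative $(\sigma,0)$ are at most $\delta_1\sqrt T+O(1)$.

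The differences are cosmetic. The paper fixes the scale at $1-\delta_1$ instead of your input-dependent $\theta=Q/(Q+R_1)$, which avoids the linearity argument for $R=\theta R_1$. It also runs a forward closed-trajectory construction for $|a|<1$, where your appeal to the all-horizon converse of Section~\ref{sec:converse} is equally valid.
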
 \Proof The four-sample construction in Section~\ref{sec:construction}-B gives the lower bound for every $a$. For the converse, use the normalized coordinates \eqref{eq:normalized} and write $s=\sqrt{(1+a^2)/2}$ and $\delta_1=(1+s)^{-1}$. The idea is to close the trajectory at both ends. This removes the terminal-state term that prevented the earlier stable-drift converse from applying to unstable plants. \par
Suppose $|a|\ne1$. For any $|q_k|\le1$, there is a sequence $z_0=z_T=0$ whose forcing $f_k=z_{k+1}-az_k$ differs from $q$ in one coordinate only: $f=q+he_j$, with $|h|$ bounded independently of $T$ and $q$. For $|a|<1$, run the recursion forward from zero and reset its last state to zero; then $|h|\le1/(1-|a|)$. For $|a|>1$, run it backward from $z_T=0$ and reset its initial state to zero; then $|h|\le|a|/(|a|-1)$. \par
The endpoint term in \eqref{eq:endpoint} now vanishes for every $a$. Choose $\sigma\in\{-1,1\}$ minimizing \[ R^2:=\sum_{k=0}^{T-1}(z_{k+1}-\sigma z_k)^2 \le\frac{\norm f^2}{s^2}. \] Use the actual trajectory $x_k=(1-\delta_1)z_k$. Its true disturbance is $v=-\delta_1q+(1-\delta_1)he_j$; the residual of the unstabilizable alternative $(\sigma,0)$ has norm $(1-\delta_1)R\le\delta_1\norm f$. This scale balances the two costs. Since the correction occupies one coordinate and \mbox{$\norm f^2\le T+2|h|+h^2$}, \[ \max\bigl\{\norm v^2,\norm{y-\sigma x}^2\bigr\} \le\delta_1^2T+O_a(1), \] with a remainder uniform over inputs. Thus every $\delta>\delta_1$ defeats every sufficiently long experiment. At $a=\pm1$, half cancellation gives the upper bound $\delta_1=1/2$ at every horizon. Restoring $|b|\bar u$ proves the limit. \Endproof\par
For $|a|>1$, the individual limit is strictly below its global ceiling, as Fig.~\ref{fig:ceilings} shows. We discuss the significance of this gap in Section~\ref{sec:conclusion}. \par
To carry the obstruction beyond one state, consider $x_{k+1}=Ax_k+Bu_k+w_k$, with $x_0=0$, $x_k\in\R^n$, $u_k\in\R^m$, and $\norm{u_k}_\infty\le\bar u$. Bound the matrix $W=[w_0\ \cdots\ w_{T-1}]$ by $\norm W_2\le\eps\sqrt T$, where $\norm{\cdot}_2$ is the spectral norm, and use the same norm and radius for consistent-model residuals. Define $\eps_i(T;A,B)$ by the same experiment quantifiers and three requirements, using Schur stability and a common positive definite quadratic Lyapunov matrix in place of scalar stability and $p$. All real matrix pairs remain eligible as alternative models. \par
\begin{corollary}[An obstruction in a real mode] \label{cor:real-mode} Let $\ell A=a\ell$ for a real row $\ell$ with $\norm\ell_2=1$. Then, for $i=1,2,3$, \begin{equation} \limsup_{T\to\infty}\eps_i(T;A,B) \le\frac{\bar u\norm{\ell B}_1}{1+\sqrt{(1+a^2)/2}}. \label{eq:real-mode} \end{equation} \statementdisplayend\end{corollary}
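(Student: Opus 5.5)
Since levels 2 and 3 each imply level 1 (a common gain or a common quadratic certificate yields a stabilizing gain for every consistent model), $\eps_3(T;A,B)\le\eps_2(T;A,B)\le\eps_1(T;A,B)$. It therefore suffices to bound $\limsup_T\eps_1(T;A,B)$. The plan is to project the multidimensional experiment onto the real mode $\ell$ and reuse the closed-trajectory adversary from the proof of Theorem~\ref{thm:individual-limit}. Set $b_\ell=\ell B\in\R^{1\times m}$. If $b_\ell=0$, the right-hand side is zero, and we need the obstruction at any positive noise; the scalar construction with $b=0$ handles this, since the mode is then uncontrollable in the true plant. Otherwise, the projected scalar signal $\xi_k=\ell x_k$ satisfies $\xi_{k+1}=a\xi_k+b_\ell u_k+\ell w_k$. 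Its effective input $b_\ell u_k$ obeys $|b_\ell u_k|\le\bar u\norm{b_\ell}_1$. This is exactly the scalar setting \eqref{eq:normalized} after normalizing by $\bar u\norm{b_\ell}_1$.

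\textbf{Adverse disturbance.} Fix $\delta>\delta_1=(1+s)^{-1}$, with $s=\sqrt{(1+a^2)/2}$, in the normalized units. Given any input sequence, let $q_k=b_\ell u_k/(\bar u\norm{b_\ell}_1)$, so $|q_k|\le1$. Build the closed trajectory $z$ and the scalar disturbance $v$ exactly as in the proof of Theorem~\ref{thm:individual-limit}. The case $|a|=1$ is the half-cancellation variant, which yields $\delta_1=1/2$. Then lift to $\R^n$ by choosing $w_k=\ell^\top(\bar u\norm{b_\ell}_1\,v_k)-(I-\ell^\top\ell)Bu_k$. This disturbance cancels the input's action off the $\ell$ direction and keeps the state on the line $\ell^\top$: inductively $x_k=\ell^\top\xi_k$ with $\xi_k=\bar u\norm{b_\ell}_1(1-\delta_1)z_k$, using $\ell A\ell^\top=a$ because $\norm\ell_2=1$. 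The cancellation term has size $O(\sqrt T)$ in spectral norm and is not small, so this direct lift fails the budget.

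\textbf{Main obstacle.} Cancelling the component $(I-\ell^\top\ell)Bu_k$ is not affordable. Instead, I would let the state evolve freely in the complementary directions and exploit that alternative models may be arbitrary matrix pairs. The alternative model is $\mathcal A=A+\ell^\top(\sigma-a)\ell$ together with $\mathcal B=B-\ell^\top\ell B$. It modifies only the $\ell$-row: the projected residual $\ell(x_{k+1}-\mathcal Ax_k-\mathcal Bu_k)$ equals the scalar residual $y-\sigma x$ of Theorem~\ref{thm:individual-limit}, while the complementary components match the true dynamics up to $(I-\ell^\top\ell)w_k$. More precisely, one should take $\mathcal A=A+\ell^\top(\sigma\ell-\ell A)$. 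Then $\ell\mathcal A=\sigma\ell$ and $\ell\mathcal B=0$, so $\ell$ is a left eigenvector of $\mathcal A$ with eigenvalue $\sigma$ and $\ell\mathcal B=0$. By the PBH test this model is not stabilizable, so level 1 fails.

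\textbf{Budget check.} The true disturbance is $W=\ell^\top v^\top$, which is rank one with spectral norm $\norm v$. The alternative's residual matrix is $\ell^\top(y-\sigma x)^\top$, also rank one, because off the $\ell$ row the alternative equals the truth and the truth's noise is supported on $\ell$. Both norms are at most $\delta_1\sqrt T+O_a(1)$ by the scalar estimate, uniformly in the input. Hence every $\delta>\delta_1$ defeats all sufficiently long experiments, which gives \eqref{eq:real-mode} after restoring the scale $\bar u\norm{\ell B}_1$. The step I expect to need the most care is the realization step: the projected input must be $b_\ell u_k$ with unit-normalized $q$, not a sign-adjusted input. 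This holds because the scalar construction in Theorem~\ref{thm:individual-limit} works for arbitrary $|q_k|\le1$.
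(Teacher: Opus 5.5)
Your final argument is correct and is essentially the paper's proof. You project onto $\ell$, run the closed-trajectory scalar adversary on the forcing $\ell B u_k$, inject the disturbance along $\ell^\top$, and use the alternative $A+(\sigma-a)\ell^\top\ell$, $(I-\ell^\top\ell)B$; its rank-one residual has exactly the scalar norm. You can simply delete the discarded cancellation lift. For $\ell B=0$, the cleaner justification, as in the paper, is that the zero-disturbance record has $\ell x_k\equiv 0$, so the $\sigma=1$ alternative already fits with zero residual.
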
 \Proof The projected state $\ell x_k$ has drift $a$ and forcing bounded by $\beta_\ell=\bar u\norm{\ell B}_1$. For $\beta_\ell>0$, apply the preceding scalar converse to that forcing and inject its disturbance along $\ell^\top$. With $P=\ell^\top\ell$, choose the alternative \[ A'=A+(\sigma-a)P,\qquad B'=(I-P)B. \] It is unstabilizable: $\ell A'=\sigma\ell$, $\ell B'=0$, and $|\sigma|=1$. Its residual at each sample is $\ell^\top(\ell x_{k+1}-\sigma\ell x_k)$. Both full disturbance and residual matrices therefore have exactly their scalar norms. The scalar upper bound defeats level 1 and hence all three levels. If $\beta_\ell=0$, take $\sigma=1$: the zero-noise record has $\ell x_k=0$ and already admits this alternative. \Endproof\par
The factor $\bar u\norm{\ell B}_1$ is the largest input forcing available in the mode. Taking the smallest bound over real left eigenvectors gives a plant-level obstruction, including for diagonal systems. No dimension factor is lost, and the rank-one proof also holds with Frobenius disturbance and residual norms. This is an upper bound, rather than an exact multidimensional limit; it makes no assertion for plants without real eigenvalues and does not identify common gains with common quadratic certificates. \par
\section{Conclusion and Future Work} \label{sec:conclusion} The global ceiling and the long-horizon limit answer different experimental questions: can \emph{some} experiment length succeed, and what remains possible when the experiment is long? For common certification the answers agree. For individual stabilizability of an unstable plant, they differ: noise levels between the two thresholds allow a successful one-sample experiment but defeat every sufficiently long experiment. Longer records also permit more total disturbance energy, since the budget is $\eps^2T$. Thus additional samples alone do not ensure a stronger worst-case guarantee. At a fixed absolute noise budget, appending data can only reduce the set of consistent models; there is no conflict with that basic information principle. \par
The oracle ceilings give a quantitative target for designs that must choose inputs without knowing the plant. An important next question is how much robustness is lost when that knowledge is unavailable. Even below a ceiling, the required experiment may be long: useful finite-horizon bounds and practical constructions near the minimum successful length would complement the exact noise limits. For multidimensional systems, Corollary~\ref{cor:real-mode} identifies an unavoidable obstruction, but matching constructions and the gap between common gains and common quadratic certificates remain to be understood. State constraints and measurement errors are further steps toward experiments that can be implemented on a physical plant. \par
\section*{Appendix: A Binary Spectral Construction} \label{sec:appendix} We prove Lemma~\ref{lem:design} by choosing an auxiliary power measure with $\int D_r/D_\lambda\dd\mu\le1$. The arithmetic--harmonic mean inequality then gives the required energy bound. Finally, we realize the measure to arbitrary accuracy by a deterministic periodic input. \par
\emph{A family with explicit correlations.} Start with $N\ge1$ independent uniform signs. At each step choose one uniformly, record its value $\zeta_k$, and reverse it. The uniform distribution of the signs is invariant. With $\rho_N=1-2/N$, conditioning on the selected sign gives \[ C_0=1,\qquad C_j:=\mathbb E[\zeta_0\zeta_j]=-N^{-1}\rho_N^{j-1} \quad(j\ge1). \] Duplicate each value and randomize the two time phases. The resulting stationary binary process has power measure $\mu_N$ with $\int\cos(j\omega)\dd\mu_N=R_j$, where $R_{2j}=C_j$ and $R_{2j+1}=(C_j+C_{j+1})/2$. \par
For $|z|<1$, expand \[ D_z^{-1}=\frac{1+2\sum_{j\ge1}z^j\cos(j\omega)}{1-z^2}. \] Summing the geometric series gives \begin{equation} g_N(z):=\int D_z^{-1}\dd\mu_N =\frac{N+(N-1)z}{N+(2-N)z^2}. \label{eq:binary-g} \end{equation} For $N=1$ this is the period-four input $(1,1,-1,-1)$. \par
\emph{Balancing the critical direction.} We mix two neighboring family members so that the ratio integral touches one at $\lambda=r$ and never exceeds it. Write $A_t(z)=t+(2-t)z^2$ and set \[ \nu=1+\frac{2r}{1-r^2},\qquad N=\lfloor\nu\rfloor,\quad\theta=\nu-N. \] The desired mixture is \begin{equation} \mu_*= \frac{(1-\theta)A_N(r)\mu_N+\theta A_{N+1}(r)\mu_{N+1}} {(1+r)^2}. \label{eq:binary-mixture} \end{equation} Its weights are nonnegative and sum to one because $A_t$ is affine in $t$ and $A_\nu(r)=(1+r)^2$. Writing $g(z)=\int D_z^{-1}\dd\mu_*$, substitution gives $g(r)=1/(1-r^2)$. \par
For $0<|z|<1$, the identity $D_r=(r/z)D_z+(1-r/z)(1-rz)$ gives \[ \int\frac{D_r}{D_z}\dd\mu_*-1 =\frac{z-r}{z}\bigl((1-rz)g(z)-1\bigr). \] Substituting \eqref{eq:binary-g}--\eqref{eq:binary-mixture} and factoring the right-hand side yields \begin{equation} \int\frac{D_r}{D_z}\dd\mu_* =1-\frac{1-r}{1+r} \frac{(z-r)^2B(z)}{A_N(z)A_{N+1}(z)}, \label{eq:binary-ratio} \end{equation} where $B(z)=A_{N+1-\theta}(z)-\theta(1-\theta)(1+z)^2$. For $t\ge1$ and $|z|\le1$, $A_t(z)=1+z^2+(t-1)(1-z^2)\ge1+z^2$. Since $0\le\theta<1$, this also gives \[ B(z)\ge1+z^2-\tfrac14(1+z)^2 \ge\tfrac12(1+z^2)>0. \] Thus the subtracted term in \eqref{eq:binary-ratio} is nonnegative: the ratio integral is at most one for $|z|<1$, including $z=0$ by continuity. At $r=0$ the same construction selects $\mu_1$. \par
For $\tau>0$ and $|\lambda|<1$, apply the arithmetic--harmonic mean inequality to $\tau+D_r/D_\lambda$: \[ \int\frac{\tau D_\lambda}{D_r+\tau D_\lambda}\dd\mu_* \ge\frac{\tau}{\tau+\int D_r/D_\lambda\dd\mu_*} \ge\frac{\tau}{1+\tau}. \] The original integrand is bounded and continuous since $D_r>0$, so passage to $\lambda=\pm1$ gives the endpoints. At $\tau=0$ the assertion is immediate. \par
\emph{A deterministic periodic realization.} It remains to remove the auxiliary randomness. Fix a word length $m$ and approximate the probabilities of all binary $m$-words of the stationary mixture by rational probabilities. Preserve equal weights for a word and its negative; this is possible because the process is invariant under reversal of all signs. Concatenate copies in those integer proportions and repeat the resulting word periodically. Its mean is exactly zero. At a fixed lag $j<m$, at most a fraction $j/m$ of products crosses word boundaries; the remaining correlation approaches that of the mixture as the rational approximation improves. Letting $m$ grow and the rational errors tend to zero, the periodic correlations approach those of $\mu_*$ at every fixed lag. Trigonometric-polynomial approximation gives convergence of power integrals for every continuous function of frequency. \par
For fixed $r<1$, the kernel in \eqref{eq:spectral-design} is jointly continuous in $(\omega,\lambda,\tau)$ on $[0,2\pi]\times[-1,1]\times[0,1]$, since $D_r\ge(1-r)^2>0$. A finite uniform net in $(\lambda,\tau)$ therefore makes the convergence uniform in both parameters. Choose the word accurately enough that the error is below $\kappa$. It is a single deterministic input, fixed for all disturbances. This proves Lemma~\ref{lem:design}. \Endproof\par
\section*{Acknowledgment} \emph{AI use statement:} This letter was written with the assistance of ChatGPT 5.6 Sol. Working with AI was an iterative process, making it difficult to isolate its specific contribution. \space\space \par
\par
\end{document}